\theoremstyle{plain}
\newtheorem{theorem}{Theorem}[section]
\newtheorem{lemma}[theorem]{Lemma}
\theoremstyle{definition}
\theoremstyle{remark}
\newcommand{\be}{\begin{equation}}
\newcommand{\ee}{\end{equation}}
\newcommand{\bx}{\mbox{\bf x}}
\newcommand{\bgamma}{\mbox{\boldmath $\gamma$}}
\begin{document}


\title{Testing Homogeneity in a heteroscedastic contaminated normal mixture}

\author{
\name{Xiaoqing Niu\textsuperscript{1}
and Pengfei Li\textsuperscript{2}
and Yuejiao Fu\textsuperscript{3}\thanks{CONTACT Yuejiao Fu. Email: yuejiao@mathstat.yorku.ca}}
\affil{\textsuperscript{1}Toronto Dominion Bank Group, Toronto, ON Canada; 
\textsuperscript{2}Department of Statistics and Actuarial Science, University of Waterloo, Waterloo, ON Canada;
\textsuperscript{3}Department of Mathematics and Statistics, York University, Toronto, ON Canada}
}

\maketitle

\begin{abstract}
Large-scale simultaneous hypothesis testing appears in many areas such as
microarray studies, genome-wide association studies, brain imaging, disease mapping and astronomical surveys.
A well-known inference method is to control the false discovery rate.
One popular approach is to model the $z$-scores derived from the individual $t$-tests
and then use this model to control the false discovery rate.
We propose a new class of contaminated normal mixtures for modelling $z$-scores.
We further design an  EM-test for testing homogeneity in this class of mixture models.
We show that the EM-test statistic has a shifted mixture of chi-squared
limiting distribution.
Simulation results show that the proposed testing procedure has accurate type I error
and significantly larger power than its competitors under a variety of model specifications.
A real-data example is analyzed to exemplify the application of the proposed method.
\end{abstract}

\begin{keywords}
Computer experiment;
heteroscedastic contaminated normal model;
homogeneity test;
likelihood ratio test; EM-test.
\end{keywords}

\section{Introduction}

Large-scale simultaneous hypothesis testing appears in many applications such as
microarray studies, genome-wide association studies, brain imaging, disease mapping and astronomical surveys. This paper is motivated by  large-scale multiple testing problem, in which thousands and sometimes millions of
hypothesis tests need to be conducted on parallel data sets.
The prostate data discussed in \cite{efron2010large} is an excellent illustrating example for the large-scale multiple testing problem.
The data consists of the gene expression levels of $n=6033$ genes for 102 male individuals:
52 prostate cancer patients and 50 normal control subjects.
The goal  of the study is to find the genes that
are differentially expressed between the two groups of subjects. For the $i$th gene, we
test
$$
H_{0i}: \mbox{gene $i$ is not differentially expressed in the two samples or gene $i$ is ``null"}.
$$
Hence over 6000 hypotheses need to be tested at the same time.
The commonly used test statistic for $H_{0i}$ is the traditional two-sample $t$-test statistic $t_i$.
Under $H_{0i}$, $t_{i}$ follows or approximately follows a $t$-distribution with 100 degrees of freedom.
\citet{efron2010large} suggested transforming $t_i$ to $z_i$
by
$
z_i=\Phi^{-1}(F_{100}(t_i)),
$
where $\Phi$ and $F_{100}$ are the cumulative distribution functions for the standard normal distribution $N(0,1)$
and the $t$-distribution with 100 degrees of freedom.
Under $H_{0i}$, the $z$-score $z_i$ follows or approximately follows  $N(0,1)$, known as the theoretical null distribution.

In large-scale multiple testing problems, scientists are not interested in controlling
the type-I error individually. Instead they prefer the notion of  controlling the false discovery rate
\cite{benjamini1995controlling}.
Among many methods for controlling this rate,
\citet{efron2004large} proposed  the use of a finite normal mixture
to model the $z$-scores.
See also \cite{mclachlan2006simple} and \cite{dai2010contaminated}.
An appropriate candidate model is
$$
(1-\alpha) f(x; \mu_1, \sigma_1)+\alpha f(x;\mu_2, \sigma_2)
$$
with the first component corresponding to the null genes
and the second component corresponding to the differentially expressed genes.
Here $ f(x;\mu, \sigma)$ denotes the probability density function (pdf)
of the normal distribution $N(\mu,\sigma^2)$.
In theory, $f(x; \mu_1, \sigma_1)$ should be the pdf of $N(0,1)$, the theoretical null distribution.
In practice, the theoretical null distribution may fail to work due to several reasons.
For example,
the null distribution of $t_i$ may not be the exact $t$-distribution, see Efron (2010, pp.\ 105--109).
To solve the problem, \citet{efron2010large} suggested the notion of empirical null distribution,
which can be estimated from the data itself and in many examples can be well approximated by $N(0,\sigma_1^2)$.
The use of an empirical null distribution rather than the theoretical null distribution can be critical for making correct inference.
Along this line, \citet{dai2010contaminated} suggested a homoscedastic  contaminated normal mixture model
\begin{equation}
\label{chap4.model2}
(1-\alpha)f(x; 0,\sigma)+\alpha f(x; \mu,\sigma).
\end{equation}
Before  identifying  the genes that are differentially expressed in the two samples,
one should first detect the existence  of differentially expressed genes.
To this end, \citet{dai2010contaminated} proposed two methods, the MLRT and the D-test, to test homogeneity in (\ref{chap4.model2}).

Homogeneity in variance assumption is often violated in practice (see the real example in Section \ref{chap4 con eg}). To relax
the common variance assumption, in this paper,  we suggest modelling the $z$-scores by
a heteroscedastic contaminated normal mixture model
\begin{equation}
\label{chap4.model}
(1-\alpha) f(x; 0, \sigma_1)+\alpha f(x;\mu, \sigma_2).
\end{equation}
We wish to test
\begin{equation}
\label{chap4.test}
H_0:\alpha=0 \textrm{ or } (0, \sigma_1)=(\mu, \sigma_2).
\end{equation}
Developing an effective testing procedure for (\ref{chap4.test})
is a challenging problem.
The log-likelihood function is unbounded \cite{chen2008inference}; and the Fisher information in the mixing proportion direction may be infinity \cite{chen2009hypothesis}. The asymptotic results for such existing methods as the likelihood ratio test can not be directly applied \cite{dacunha1999testing,liu2003asymptotics}. To tackle the problem, we adopt the idea of the EM-test which was originally developed by  \citet{li2009non} and \citet{chen2009hypothesis}. Utilizing the special feature of the heteroscedastic contaminated normal mixture model (\ref{chap4.model}), we design a new EM-test specifically for hypothesis (\ref{chap4.test})
and develop its asymptotic properties.

For comparative purposes, we consider another stream for detecting the existence of differentially expressed genes
which is based on $p$-values: $p_i=2\{1-\Phi(|z_i|)\}$, $i=1,\ldots,n$.
A commonly used model for modelling $p$-values is
the following contaminated Beta model
\begin{equation}
\label{chap4.model3}
(1-\alpha)B(1,1)+\alpha B(a,b),
\end{equation}
where $B(a,b)$ denotes the Beta distribution with two parameters $a$ and $b$.
Note that $B(1,1)$ is also  the uniform distribution over (0,1).
See \cite{allison2002mixture,peng2003simultaneous,dai2008omnibus},
and the reference therein.
\citet{dai2008omnibus} further proposed the use of MLRT and D-test for
testing homogeneity in (\ref{chap4.model3}). One drawback of methods based on the contaminated Beta model (\ref{chap4.model3}) is that when some  doubt has been casted on the adequacy of the theoretical null distribution, the simulated type-I error of the tests will be severely inflated as shown in our simulation studies in Section \ref{chap4 con simu}. In sharp contrast to the methods based on contaminated Beta model, the tests based on model (\ref{chap4.model2}) and the proposed EM-test based on model (\ref{chap4.model}), are invariant to the scale transformation and can be used efficiently when an empirical null distribution should be used for the large-scale hypothesis testing. Our simulation studies in Section \ref{chap4 con simu} shows that under various circumstance, the proposed EM-test has accurate levels and superior power compared with its competitors.

The rest of this paper is organized as follows.
In Section \ref{chap4 con em}, we provide a complete description of the new EM-test procedure,
and its asymptotic properties.
Simulation studies are conducted in Section \ref{chap4 con simu} to evaluate the empirical performance
of the proposed method and some alternatives.
A real-data example
is given in Section \ref{chap4 con eg}. Proofs are given in the Appendix.

\section{Main Results \label{chap4 con em}}
We propose an EM-test for the homogeneity problem (\ref{chap4.test}). We first give a detailed description of our testing procedure and then establish its theoretical foundations and asymptotic properties.

\subsection{EM-test}
\label{chap4 con proc}
Suppose $X_1, \ldots, X_n$ are a random sample
of size $n$ from the contaminated normal mixture model (\ref{chap4.model}). We denote the log-likelihood function as
\[
l_n(\alpha, \mu, \sigma_1, \sigma_2)
=\sum_{i=1}^n \log \{ (1-\alpha)f(X_i; 0, \sigma_1)+\alpha f(X_i; \mu, \sigma_2) \}
\]
and define the modified log-likelihood function as
\[
{pl_n}(\alpha, \mu, \sigma_1, \sigma_2)
=l_n(\alpha, \mu, \sigma_1, \sigma_2)+ p(\alpha)
+p_n(\sigma_1)+p_n(\sigma_2).
\]
The penalty function $p(\alpha)$
is used to prevent the fitting of $\alpha$ being close to 0.
Note that  in large-scale hypothesis testing problem, the true value of $\alpha$
is in general quite small, for example, smaller than 0.25 (Efron, 2010). Therefore, we do not need to penalize the fitting of $\alpha$ being close to 1 .
One choice for $p(\alpha)$ could be $p(\alpha)= \log (\alpha) $, which has been used in Fu, Chen, and Li (2008)
for testing homogeneity in a class of contaminated von Mises model.
The penalty $p_n(\sigma)$ prevents the fitting of $\sigma_1^2$ and $\sigma_2^2$
being close to 0, which help avoid  the unbounded likelihood \citep{chen2008inference}.
An example of $p_n(\sigma)$ is
$$
p_n(\sigma)=-a_n \cdot \left( \frac{\hat \sigma_0^2}{\sigma^2} + \log \frac{\sigma^2}{\hat \sigma_0^2} \right),
$$
where $\hat \sigma_0^2=\sum_{i=1}^nX_i^2/n$ is the maximum likelihood estimator
of the variance parameter under the null model.
Using the penalty function, $\hat \sigma_0^2$ also maximizes the modified log-likelihood function under the null hypothesis.
The choice of $a_n$ is  discussed in Section \ref{chap4 con tune}.

Building on the modified log-likelihood function, we develop an EM-test for the heteroscedastic contaminated normal mixture model.
We first choose  a finite set
$\{\alpha_1 , \ldots, \alpha_J\}$ as the initial values for
$\alpha$ and a positive integer $K$ as the number of iterations.
As mentioned earlier, in most large-scale testing problems, the value of $\alpha$ is quite small.
We recommend using $\{0.05, 0.15, 0.25\}$ in practice. Empirical experience suggests that
further increasing the number of initial values will not significantly improve the power of the EM-test.
We suggest use $K=3$ as the number of iterations. The EM-test statistics are constructed in the following steps.

Step 1. Set $k=1$. For $j=1,2, \ldots, J$, set $\alpha_j^{(1)}=\alpha_j$ and compute
\[
(\mu_j^{(1)}, \sigma_{j,1}^{(1)}, \sigma_{j,2}^{(1)})
=\arg\max_{\mu, \sigma_1, \sigma_2}
pl_n(\alpha_j, \mu, \sigma_1, \sigma_2).
\]

Step 2. For the current $k$, use an E-step to
compute the posterior probabilities
\[
w_{ij}^{(k)}=\frac{\alpha_j^{(k)}f(X_i; \mu_j^{(k)}, \sigma_{j, 2}^{(k)})}
{(1-\alpha_j^{(k)})f(X_i; 0, \sigma_{j, 1}^{(k)})
+\alpha_j^{(k)}f(X_i; \mu_j^{(k)}, \sigma_{j, 2}^{(k)})}.
\]
Update $(\alpha,\mu,\sigma_1,\sigma_2)$ via an M-step such that
\begin{eqnarray*}
\alpha_j^{(k+1)} &=&
\arg \max_{\alpha } \Big\{
 \sum_{i=1}^n(1- w_{ij}^{(k)})\log(1-\alpha) +\sum_{i=1}^n w_{ij}^{(k)}\log(\alpha)+p(\alpha)
\Big\},
\\
\mu_j^{(k+1)} &=&
\arg \max_{\mu}
\Big \{
\sum_{i=1}^n w_{ij}^{(k)}\log f(X_i;\mu, \sigma_2)
\Big \},
\\
\left(\sigma_{j, 1}^{(k+1)},\sigma_{j, 2}^{(k+1)}\right) &=&
 \arg \max_{\sigma_1,\sigma_2}
 \Big\{
 \sum_{i=1}^n(1- w_{ij}^{(k)})\log f(X_i; 0, \sigma_1)
 +p_n(\sigma_1)+p_n(\sigma_2)\\
 &+&
 \sum_{i=1}^n w_{ij}^{(k)}\log f(X_i;\mu_j^{(k)}, \sigma_2)
  \Big\}.
\end{eqnarray*}
Iterate the E-step and M-step $K-1$ times.

Step 3. For each $k$ and $j$, define
\[
M_{n}^{(k)} (\alpha_j)=
2\{pl_n (\alpha_j^{(k)}, \mu_j^{(k)}, \sigma_{j,1}^{(k)},\sigma_{j,2}^{(k)})
- pl_n (1, 0, \hat {\sigma}_0, \hat {\sigma}_0)\}.
\]
The EM-test statistic $EM_n^{(K)}$ is then
$$
EM_n^{(K)}=\max\{M_n^{(K)}(\alpha_j), j=1,2, \ldots, J\}.
$$

Finally, the null hypothesis is rejected when $EM_n^{(K)}$
exceeds the critical value of the limiting distribution
given in the next section.

\subsection{Asymptotic properties}
\label{chap4 con prop}

In this section, we derive the limiting distribution of the EM-test statistics $EM_{n}^{(K)}$
under the following conditions on the penalty functions
$p(\alpha)$ and $p_n(\sigma)$.

\begin{enumerate}
\item[D1]
The penalty function $p(\alpha)$ is continuous,
approaches negative infinity as $\alpha$ approaches zero.
Further $p(1)=0$.

\item[D2]
$\sup \{ |p_n(\sigma)| \}=o(n)$.

\item[D3]
$p'_n(\sigma)=o_p (n^{1/6})$ at any $\sigma >0$.

\item[D4]
$p_n(\sigma)\leq 4 (\log n)^2 \log (\sigma)$,
when $\sigma \leq n^{-1}$
and $n$ is large.

\end{enumerate}

Conditions D1 to D4 are weak technical conditions, which guarantee that the EM-test has a simple limiting distribution. The aforementioned choice of penalty functions all satisfy  these conditions.

\begin{theorem}
\label{chap4 con thm2}
Suppose that the penalty functions $p(\alpha)$,
$p_n(\sigma)$
satisfy Conditions D1-D4 and
the initial set $\{\alpha_1, \ldots, \alpha_J\}\in (0,1)$.
Under the null hypothesis
and for any fixed finite $K$, as $n \to \infty$,
$$
EM_{n}^{(K)} \to \frac{1}{2}\chi_1^2 + \frac{1}{2}\chi_2^2+ 2 \max_{j} p(\alpha_j)
$$
in distribution.
\end{theorem}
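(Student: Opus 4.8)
The plan is to follow the now-standard EM-test analysis (as in \citet{li2009non} and \citet{chen2009hypothesis}), adapted to the special structure of model~(\ref{chap4.model}) in which the first component is anchored at mean $0$. The key observation is that under $H_0$ the true parameter is $\sigma_1=\sigma_2=\sigma_0$ with $\mu$ arbitrary and $\alpha=0$, but the penalty $p(\alpha)$ forces each $\alpha_j^{(k)}$ to stay bounded away from $0$; this is what makes the limit a \emph{shifted} mixture of chi-squares rather than a mixture of point masses at $0$. First I would fix one initial value $\alpha_j$ and analyze $M_n^{(k)}(\alpha_j)$ for a single $j$. Writing $\bgamma=(\mu,\sigma_1,\sigma_2)$, I would expand $pl_n(\alpha_j^{(k)},\mu_j^{(k)},\sigma_{j,1}^{(k)},\sigma_{j,2}^{(k)})$ around the null value $(0,0,\sigma_0,\sigma_0)$, using Conditions D2--D4 to show the penalty terms $p_n(\sigma_1)+p_n(\sigma_2)$ and their derivatives are asymptotically negligible at rate $o_p(n^{1/2})$ in the relevant neighbourhood, and D4 to rule out the boundary $\sigma\to 0$ so the unbounded-likelihood pathology does not contribute. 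The term $2p(\alpha_j^{(k)})$ survives; I would show $\alpha_j^{(k)}\to\alpha_j$ in probability (because the score equation for $\alpha$ is dominated by the $p(\alpha)$ term near $\alpha=0$, pinning the update), so $2p(\alpha_j^{(k)})\to 2p(\alpha_j)$.

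Next I would handle the quadratic part. Reparametrize by $t=\mu$, $\beta_1=\sigma_1^2/\sigma_0^2-1$, $\beta_2=\sigma_2^2/\sigma_0^2-1$, and note that, because $\mu$ is a second-order (strongly identifiable of higher order) direction in a normal location mixture, the effective local parameter combining $\alpha$ and $\mu$ is something like $\alpha\mu$ together with $\alpha\mu^2$; however, since $\alpha_j$ is held fixed at each step (only $\mu,\sigma_1,\sigma_2$ are profiled in Step~1 and updated in Step~2), the degeneracy is milder. I would compute the profile log-likelihood ratio, after centering, as a quadratic form in a bounded number of limiting Gaussian score statistics. The crucial point is that the free directions that remain after conditioning on $\alpha=\alpha_j$ are the location shift $\mu$ and the two variance perturbations $\beta_1,\beta_2$, but subject to the constraint from the null that collapses them; a careful accounting should show the maximized quadratic form equals $\frac14\chi_1^2+\frac14\chi_2^2$ in distribution for each fixed $j$ — one $\frac12\chi^2$-type contribution from a one-sided (variance-increase vs.\ decrease trade-off, or the $\alpha\in(0,1)$ half-line) direction and one from another — and, importantly, that this limiting quadratic form \emph{does not depend on $j$} nor on $k$ for $k\ge 1$. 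That last invariance (the limit is the same for all $k\ge 1$) is the standard EM-test phenomenon: one EM iteration already brings $(\mu_j^{(k)},\sigma_{j,1}^{(k)},\sigma_{j,2}^{(k)})$ into an $o_p(n^{-1/2})$ neighbourhood of the constrained maximizer, and further iterations only perturb $M_n^{(k)}(\alpha_j)$ by $o_p(1)$.

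Finally, I would assemble the maximum over $j$. Since for every $j$ we have $M_n^{(K)}(\alpha_j)=Q_n+2p(\alpha_j)+o_p(1)$ where $Q_n\overset{d}{\to}\frac14\chi_1^2+\frac14\chi_2^2$ is the \emph{same} random quantity (up to $o_p(1)$) for all $j$, the maximum over $j$ of $M_n^{(K)}(\alpha_j)$ is $Q_n+\max_j 2p(\alpha_j)+o_p(1)$, giving $EM_n^{(K)}\overset{d}{\to}\frac12\chi_1^2+\frac12\chi_2^2+2\max_j p(\alpha_j)$. (The factor-of-two rescaling from $\frac14$ to $\frac12$ comes from the definition $M_n^{(k)}=2\{pl_n(\cdot)-pl_n(\text{null})\}$ already being baked in; I would track this constant carefully.) I expect the \textbf{main obstacle} to be the quadratic-expansion step: establishing that, despite the non-identifiability at $\alpha=0$ and the potentially infinite Fisher information in the $\alpha$-direction noted in the introduction, the profile likelihood ratio with $\alpha$ pinned at $\alpha_j$ admits a \emph{bona fide} locally quadratic approximation with tight remainder — this requires delicate uniform control of the log-likelihood over a shrinking neighbourhood, the kind of argument where the penalties $p_n(\sigma)$ and Conditions D2--D4 do the real work, and where one must rule out the log-likelihood escaping to $+\infty$ along degenerate paths. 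The identification of exactly which half-space constraints produce the two independent $\frac12\chi^2$ pieces (rather than, say, $\frac12\chi_0^2+\frac12\chi_1^2$ or a $\bar\chi^2$ with different weights) is the second delicate point, and I would verify it by explicitly diagonalizing the limiting information matrix of the local score in the $(\mu,\beta_1,\beta_2)$ coordinates.
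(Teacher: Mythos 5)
Your overall skeleton---consistency of the iterates, a local expansion in which the penalty survives as $2p(\alpha_j)$, invariance of the limit over $k\ge 1$, and a final maximum over $j$ that only moves the shift $2\max_j p(\alpha_j)$---matches the paper's strategy. But the heart of the proof, the identification of the limiting quadratic form, contains two genuine errors. First, the claim that each $M_n^{(K)}(\alpha_j)$ has a $\tfrac14\chi_1^2+\tfrac14\chi_2^2$ limit which the ``factor of two'' in $M_n^{(k)}=2\{pl_n(\cdot)-pl_n(\mathrm{null})\}$ rescales to $\tfrac12\chi_1^2+\tfrac12\chi_2^2$ is not meaningful: $\tfrac12\chi_1^2+\tfrac12\chi_2^2$ denotes a \emph{mixture} (with probability $\tfrac12$ a $\chi_1^2$ draw, with probability $\tfrac12$ a $\chi_2^2$ draw), so multiplying the statistic by $2$ scales the chi-squared variates but cannot change mixture weights, and weights summing to $\tfrac12$ do not define a distribution. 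Second, your accounting of the one-sided directions is wrong. The correct decomposition (Lemma \ref{chap4 con lem2}) works in the Hermite score basis $X_i$, $Z_i=(X_i^2-1)/2$, $U_i=(X_i^3-3X_i)/6$, $V_i=(X_i^4-6X_i^2+3)/24$ with effective local parameters $t_1=\alpha\mu$ (either sign), $t_2=(1-\alpha)(\sigma_1^2-1)+\alpha(\sigma_2^2-1)$ (either sign), $t_3=o_p(t_1)$, and $t_4=3\{(1-\alpha)(\sigma_1^2-1)^2+\alpha(\sigma_2^2-1)^2\}\ge 0$. The $Z$-direction quadratic form cancels exactly against $2\{pl_n(1,0,\hat\sigma_0,\hat\sigma_0)-pl_n(1,0,1,1)\}$ because $\sigma_0$ is estimated under the null as well; what remains is one \emph{two-sided} direction ($t_1$, contributing a full $\chi_1^2$) plus one \emph{one-sided} direction ($t_4\ge 0$, contributing $\{(\sum_i V_i)^+\}^2/\sum_i V_i^2\to\tfrac12\chi_0^2+\tfrac12\chi_1^2$), and the sum of these two independent pieces is $\tfrac12\chi_1^2+\tfrac12\chi_2^2$. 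Your sketch of ``two one-sided $\tfrac12\chi^2$ contributions'' would instead yield a chi-bar-squared with weights $(\tfrac14,\tfrac12,\tfrac14)$ on $(\chi_0^2,\chi_1^2,\chi_2^2)$, a different distribution.

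The deeper missing idea is why the fourth Hermite polynomial appears at all. Because the first component mean is pinned at $0$, the variance-heterogeneity signal is invisible at second order (absorbed by the null fit of $\sigma_0$) and at third order ($t_3=o_p(t_1)$); it surfaces only in the nonnegative coefficient $t_4$ of $V_i$. Diagonalizing the limiting information matrix in the $(\mu,\beta_1,\beta_2)$ coordinates, as you propose, would not reveal this, since the relevant information lives in a degenerate fourth-order expansion rather than in the second-order Fisher information. The paper handles this by importing a quadratic upper bound in $(t_1,t_2,t_4)$ with controlled remainder (inequality (A.20) of Chen and Li (2008)) and then, crucially, verifying attainability: it exhibits $(\hat\mu,\hat\sigma_1^2,\hat\sigma_2^2)$ solving $\hat\alpha\hat\mu=\hat t_1$, $(1-\hat\alpha)(\hat\sigma_1^2-1)+\hat\alpha(\hat\sigma_2^2-1)=\hat t_2$, and $3\{(1-\hat\alpha)(\hat\sigma_1^2-1)^2+\hat\alpha(\hat\sigma_2^2-1)^2\}=\hat t_4$, so that the asymptotic upper bound is also a lower bound. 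Your proposal omits this attainability step entirely; without it one only gets a stochastic upper bound on $EM_n^{(K)}$, not its limiting distribution.
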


In the following, we give some insight in understanding  Theorem \ref{chap4 con thm2}.
Testing homogeneity in (\ref{chap4.model})
is equivalent to testing $\mu=0$ and the homogeneity of variances of the two component distributions.
For any given $\alpha_j$, $M_n^{(K)}(\alpha_j)$ contains two parts: a term due to log-likelihood difference
and a term due to penalty functions.
Testing $\mu=0$ contributes a $\chi^2_1$ to the limiting distribution of the term due to
 log-likelihood difference;
testing the homogeneity of variances contributes to another $0.5\chi^2_0+0.5\chi^2_1$
to the limiting distribution of
the term  due to log-likelihood difference.
Roughly speaking, adding two parts together,
the term due to log-likelihood difference in $M_n^{(K)}(\alpha_j)$ has a   $\frac{1}{2}\chi_1^2 + \frac{1}{2}\chi_2^2$
 limiting distribution.
 Hence $EM_n^{(K)}$ has the limiting distribution given in Theorem \ref{chap4 con thm2}
 with  $2 \max_{j} p(\alpha_j)$ coming from the penalty functions.

\subsection{Choice of the penalty functions}
\label{chap4 con tune}

In this section, we address how to choose appropriate  penalty functions $p(\alpha)$ and
$p_n(\sigma)$. For $p(\alpha)$, we use the penalty $p(\alpha)=\log(\alpha)$
as suggested in \cite{fu2008modified}. This penalty function satisfies
Condition D1 for the theoretical development. Further, in the M-step of the EM-iteration,
$\alpha$ can be updated via an explicit form.
For the penalty $p_n(\sigma)$, we recommend
$$
p_n(\sigma)=-a_n \cdot \left( \frac{\hat \sigma_0^2}{\sigma^2} + \log \frac{\sigma^2}{\hat \sigma_0^2} \right).
$$
As long as $a_n=o_p(n^{1/6})$, $p_n(\sigma)$ satisfies Conditions D2-D4.
Carefully tuning the value of $a_n$
in the penalty functions
will further improve the precision of the approximation
of the limiting distribution to the finite-sample distribution of $EM_{n}^{(K)}$.
We adopt the computer experiment idea from \citet{chen2011tuning} to
obtain an empirical formula for $a_n$.

We first carry out pilot experiment for
many choices of  sample sizes $n$.
We find that when $a_n\leq 1.4$,  the simulated type-I errors of the EM-test are larger than
the nominal levels.
Hence, in the designed experiment, $a_n$ is chosen to be from 1.6 to 4.0,
with the step length 0.2. In total, 13 values of $a_n$ are considered.
We consider three quite large sample sizes: 500, 1000, and 1500,
since in large-scale hypothesis testing problem, the number of parallel hypotheses are around thousands.
Then a $13\times 3$ full factorial design is used in our computer experiment.
For each combination of $a_n$ and $n$,
5000 random samples of size $n$ from the $N(0,1)$ are used to calculate the simulated type-I errors $\hat q$
of $EM_n^{(3)}$ at the target significance level $q$.
The discrepancy between $\hat q$ and $q$
is calculated as
$$
y=\log\{\hat q/(1-\hat q)\}
-\log\{q/(1-q)\}.
$$
Table \ref{chap4 con pdiff} presents the discrepancy between $\hat q$ and $q$ when $q=0.05$.
{
\begin{table}[!hbp]
\caption{Discrepancy between $\hat q$ and $q$ in term of $y$  under the contaminated normal mixture models}
\vspace {0.1in}
\renewcommand{\arraystretch}{1.2}
\centering {
\hspace{0.0in}
\tabcolsep=1.2mm
\begin{tabular}{lrrrrrrrrrrrrr}
\hline

\hline
 $n \backslash a_n$& 1.6& 1.8& 2.0& 2.2& 2.4& 2.6& 2.8& 3.0& 3.2& 3.4& 3.6& 3.8& 4.0\\ \hline
 500&0.175&  0.061&  0.101&  0.081&  0.081&  0.000& -0.021& -0.043& -0.111& -0.043& -0.111&  0.000& -0.208\\
 1000&0.175&  0.101&  0.081&  0.120&  0.041&  0.101&  0.041&  0.081&  0.041&  0.000& -0.043&  0.041& -0.043\\
   1500& 0.295&  0.210&  0.157&  0.101&  0.138&  0.101 & 0.061& -0.021& -0.021&  0.138 & 0.081 & 0.061&  0.081\\
\hline
\end{tabular} }
\label{chap4 con pdiff}
\end{table}
}

Analysis of variance suggests both $n$ and $a_n$
have significant effects on $y$.
After some brainstorming and exploratory analysis,
the covariates in the form of $1/n$ and $\log(a_n-1.4)$ gives the most
satisfactory outcomes in terms of both the goodness of fit and the simplicity of the resulting
formula for $a_n$.
The covariate $\log(a_n-1.4)$ effectively confines the value of $a_n$ in $(1.4,\infty)$,
as suggested by our pilot study.  We next  regress $y$ in $1/n$ and  $\log(a_n-1.4)$. Based on $39$ observations,
the fitted regression model is
\begin{equation*}
\hat y=0.159-76.775/n-0.091\log(a_n-1.4)
\end{equation*}
with Adjusted $R^2=71\%$ .
Setting $\hat y=0$ gives the following empirical formula for $a_n$:
\begin{equation*}
 a_n=\exp(1.747-843.681/n)+1.4.
\end{equation*}
Since our EM-test procedure is invariant to the scale transformation,
the empirical formula of $a_n$ is applicable to the general null distribution $N(0,\sigma^2)$.
In the next section, we examine the performance of the derived empirical formula of $a_n$
and other suggested tuning parameters.

\section{Simulation Studies}
\label{chap4 con simu}

We have conducted simulation studies to evaluate the finite-sample performance of the proposed EM-test ($EM_n^{(K)}$), the MLRT ($\lambda_{n,N}$) and the D-test ($d_{n,N}$)  proposed in \cite{dai2010contaminated} under the homoscedastic contaminated normal mixture model (\ref{chap4.model2}),
and the MLRT ($\lambda_{n,B}$) and the D-test ($d_{n,B}$) proposed in \cite{dai2008omnibus}
under the contaminated Beta mixture model  (\ref{chap4.model3}).
Note that all the five methods could be used to
detect the existence of genes that are differentially expressed between two comparison groups.

We generated 10000 replications respectively for sample size
$n=$ 100, 500, 1000, 1500, 3000, and 10000
from the theoretical null distribution $N(0,1)$, and two different empirical null distributions $N(0, 0.9^2)$ and $N(0,1.1^2)$.
Since $EM_n^{(K)}$, $\lambda_{n,N}$, and $d_{n,N}$ are invariant to the scale transformation, we only need to consider the simulated type-I error rate under the theoretical null distribution. While for $\lambda_{n,B}$ and $d_{n,B}$, we considered both theoretical and empirical null distributions. The simulated type-I error rates of various tests at the nominal level  $5\%$
are reported in Table \ref{chap4 con nulltable}.

For $EM_n^{(K)}$,  the test statistics were calculated based on the recommended tuning parameters in Section \ref{chap4 con tune} and the limiting distribution in Theorem \ref{chap4 con thm2}
was used to calculate the critical values. As we can see that, for all considered sample sizes,
the simulated type-I errors are very close to the nominal levels, which indicates
that the limiting distribution approximates the finite sample distribution reasonably well. The simulated levels of $\lambda_{n,N}$ and $d_{n,N}$ are reasonable good as well.  In addition,  Table \ref{chap4 con nulltable} also reveals a fact that a slightly departure from theoretical null model can seriously affect the levels of the two tests $\lambda_{n,B}$ and $d_{n,B}$ which are constructed based on the contaminated Beta model. When the data are generated from the theoretical null model, $\lambda_{n,B}$ and $d_{n,B}$ have slightly inflated type-I error rates with small sample size. While when the sample size getting bigger, the simulated levels of $\lambda_{n,B}$ and $d_{n,B}$ become accurate.

{
\begin{table}[!htbp]
\caption{Simulated type-I error rates (\%) of various tests at the nominal level 5\%}
\vspace {0.1in}
\renewcommand{\arraystretch}{1.2}
\centering {
\hspace{0.0in}
\tabcolsep=1.5mm
\begin{tabular}{crrrrrrrrr}
\hline

\hline
\hline \hline
$n$      & $EM_n^{(3)}$& $\lambda_{n,N}$& $d_{n,N}$ &$\lambda_{n,B}$&$d_{n,B}$ &$\lambda_{n,B}$&$d_{n,B}$ &$\lambda_{n,B}$&$d_{n,B}$\\ \hline
Null & $N(0,1)$ & $N(0,1)$& $N(0,1)$&$N(0,1)$ &$N(0,1)$ & $N(0,0.9^2)$&$N(0,0.9^2)$& $N(0,1.1^2)$&$N(0,1.1^2)$\\ \hline
 100     &5.1 & 5.3&   5.6& 5.9& 12.8& 22.1& 23.4& 22.9& 35.5\\
 500     &4.9 & 4.9&   4.9& 5.1& 6.1 &83.2 &79.6& 78.4& 81.7\\
 1000   &4.7 & 5.1&   5.1& 5.0&  5.5 &99.4& 99.0& 97.5& 98.1\\
 3000   &4.9 & 4.9&.  4.9& 4.9& 5.1 &100.0& 100.0& 100.0& 100.0\\
 10000 & 5.1& 4.8&   4.8& 4.7& 4.6 &100.0 &100.0& 100.0& 100.0\\
\hline
\end{tabular} }
\label{chap4 con nulltable}
\end{table}
}


For the simulated power, the results were based on 10000 repetitions. Based on the observation from Table \ref{chap4 con nulltable}, for both $\lambda_{n,B}$ and $d_{n,B}$, we used the critical values of the observed test statistics under the null to determine the simulated power in order to make the power comparison fair. To simulate data from alternative models, we considered various combinations of the mixing proportion $\alpha=$ 0.05, 0.07 and 0.1, the component variance $\sigma_2^2=$ 1 and 2,  and the mean $\mu=$ 1, 1.5 and 2.
In all these eighteen alternative models, the component variance $\sigma_1^2$ was set to be 1. The simulated power of the five testing procedures are listed in Table \ref{chap4 con altertable}. We only presented the power comparison
with sample size $n$=500  at the 5\% level in Table \ref{chap4 con altertable}.
The simulation results for other sample sizes and significance levels
show the similar trend and therefore are omitted. Based on Table \ref{chap4 con altertable}, we see that the proposed EM-test has superior performance than its competitors, especially for the situations with heteroscedastic component variances.

%

{
\begin{table}[!htbp]
\caption{Power (\%) of various tests with $n=500$}
\vspace {0.1in}
\renewcommand{\arraystretch}{1.2}
\centering {
\hspace{0.0in}
\tabcolsep=2.5mm
\begin{tabular}{ccc ccccc }
\hline

\hline
   $\alpha$& $\sigma_2^2$ & $\mu$& $EM_n^{(3)}$& $\lambda_{n,N}$&$d_{n,N}$&$\lambda_{n,B}$&$d_{n,B}$\\ \hline 
   0.05 &   1 &1.0&   18.9 &     19.4  & 19.3   &   10.0 &  10.6\\
   0.05 &   1 &1.5&   44.0 &     38.6 &  38.7   &  33.0  &  35.8\\
   0.05 &   1 &2.0&   78.2 &     64.5  & 65.0   &   75.7 &  79.1\\
   0.05 &   2 &1.0&   41.2 &     25.2  & 22.9   &   30.8 &  34.7\\
   0.05 &   2 &1.5&   74.0 &     53.6  & 48.4    &  61.2 & 66.4\\
   0.05 &   2 &2.0 &  93.8 &     82.9  & 78.7    &  88.7 & 91.6\\
   0.07 &    1&1.0&   31.2  &    33.3  & 33.1    &  14.8    &15.5\\
   0.07 &    1& 1.5&   66.6 &    61.6  & 61.8    &   55.4   &58.6\\
   0.07 &    1& 2.0&   94.0 &    88.1  & 88.8    &   93.2   &94.7\\
  0.07 &    2 &1.0&   59.3  &    41.6   &38.8    &  49.2    &54.4\\
  0.07 &    2 &1.5&   89.1  &    76.5   &73.3    &  83.4    &86.6\\
  0.07 &    2 &2.0&   99.1  &    96.2  & 95.1    &  97.8    &98.6\\
  0.10 &    1 &1.0 &  54.8  &    57.4  & 57.4    &   26.5   &27.5\\
  0.10 &    1 &1.5 &  89.5  &    88.0  & 88.3    &   82.3   &84.3\\
  0.10 &    1& 2.0 &  99.6  &    98.9  & 99.0    &  99.5    &99.8\\
  0.10 &    2 &1.0 &  79.2  &    64.7  & 62.6    &   74.3   &78.4\\
  0.10 &    2 &1.5 &  98.2  &    94.7  & 93.8    &   96.8   &97.8\\
  0.10 &    2 &2.0 &  99.9  &    99.8  & 99.8    &   99.7   &99.9\\
\hline
\end{tabular} }
\label{chap4 con altertable}
\end{table}
}

\section{A Real-data Example}
\label{chap4 con eg}

The example concerns the police data taken from \cite{efron2010large}. The police data can be downloaded from the website at \url{http://statweb.stanford.edu/~ckirby/brad/LSI/datasets-and-programs/datasets.html}.
In 2006 at New York City,
a study was conducted to investigate
whether there are some police officers
that have racial bias with pedestrian stops.
The preliminary data included $\bx_{ij}$,
the vector of covariates for police officer $i$, stop $j$;
$y_{ij}$= 0 or 1, the indicator of
whether the stopped person belonged to
a certain minority group or not.
A logistic regression model
$$
\log\frac{\mbox{Pr}(y_{ij}=1)}{1-\mbox{Pr}(y_{ij}=1)}
=\beta_i+\bgamma^{\tau}\bx_{ij}
$$
was used to estimate the ``officer effect" $\beta_i$ \cite{efron2010large}.
The $z$-score of the $i$-th officer is defined as
\[
z_i=\hat \beta_i / \mbox{se}(\hat \beta_i ).
\]
In total, $n=2749$ $z$-scores are obtained.
Large positive $z_i$'s are considered as
signs of possible racial bias.

\begin{figure}[!ht] 
\caption{Histogram and two fitted densities of the police data:
the density from the homogeneous normal distribution (solid line) and the density from
the two-component contaminated normal distribution (dashed line). }
\centering
\includegraphics[scale=0.5,angle=-90]{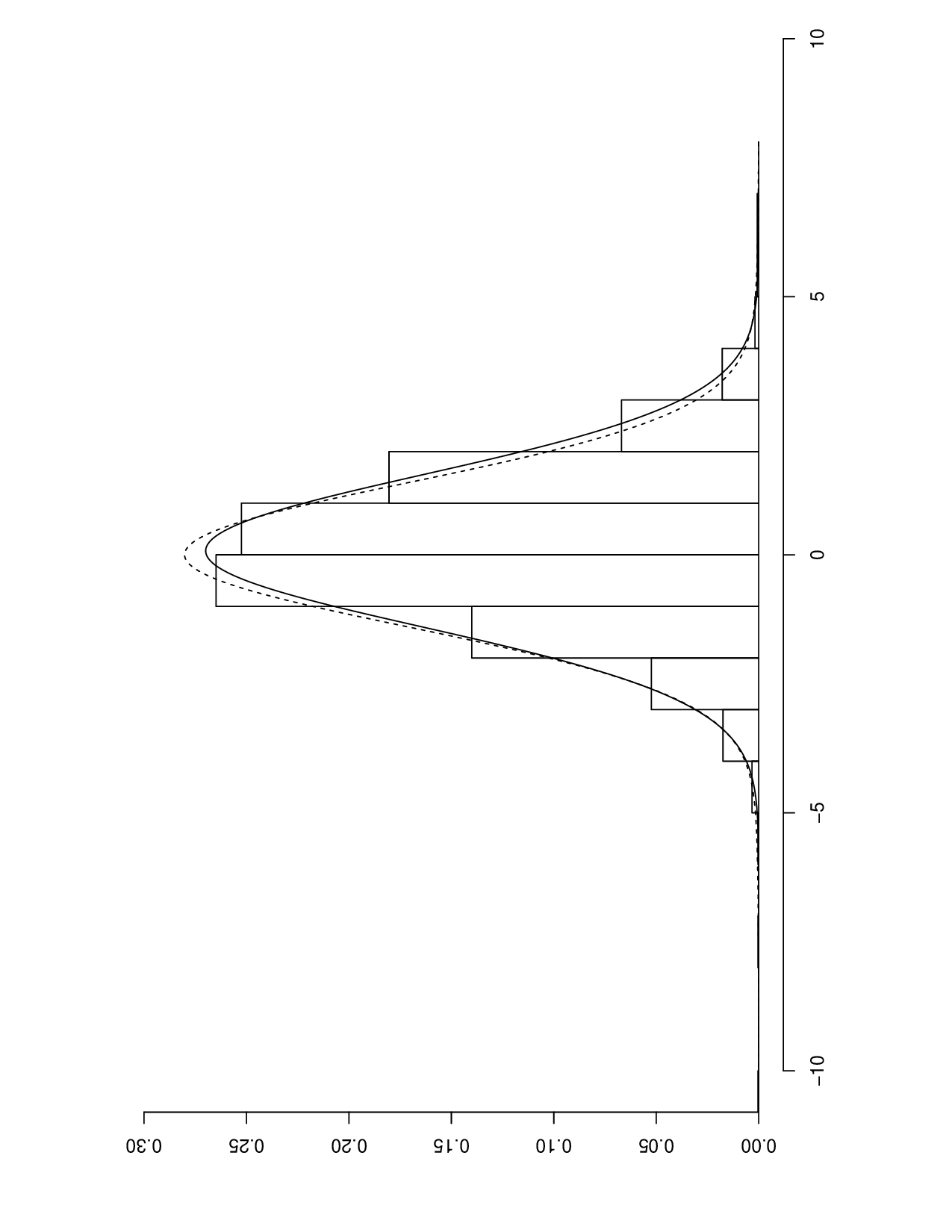}
\label{chap4 con policehist}
\end{figure}

Figure \ref{chap4 con policehist} shows the histogram of
the 2749 $z$-scores.
The solid line and the dashed line are the fitted density curves
for the homogeneous normal distribution and the contaminated normal distribution in (\ref{chap4.model}), respectively.
Merely based on this figure,
it is hard to tell which model provides a better fitting.
Hence a formal test is required here.
The EM-test statistics in this example
are found to be
$EM_n^{(3)}$ =41.042 with the $p$-value being around $1.7e-10$
calibrated by its  limiting distribution.
Based on the $p$-value,
the null hypothesis is soundly rejected.
We also apply the MLRTs and D-tests under both
the contaminated normal model in (\ref{chap4.model2}) and
the contaminated Beta model in (\ref{chap4.model3}) to the police data example.
The $p$-values for the test statistics
$\lambda_{n,N}$, $d_{n,N}$, $\lambda_{n,B}$, and $d_{n,B}$ calibrated by their respect limiting distributions are
respectively $7.2e-06$, $3.9e-1$, $0$, and $0$.
Clearly,
the proposed EM-test provides stronger evidence than $\lambda_{n,N}$ and $d_{n,N}$.
It looks that  $\lambda_{n,B}$  and $d_{n,B}$ provides even stronger evidence.

The fitted contaminated normal model for the 2749 $z$-scores
is
\[
0.951 N(0, 1.391^2)+ 0.049 N(0.021, 2.610^2).
\]
The two component variances are quite different, which explains why the proposed EM-test is more powerful than $\lambda_{n,N}$ and $d_{n,N}$.
Further note that first component distribution is quite different from the theoretical null distribution $N(0,1)$.
Efron (2010) derived the empirical null and found its variance is $1.40^2$, which is quite close to $1.391^2$ but far away from 1.
Both suggest that the  theoretical null distribution may not work here. Table \ref{chap4 con nulltable} in the simulation studies showed that
in such situations, the limiting distributions of $\lambda_{n,B}$  and $d_{n,B}$ do not provide reasonable approximations and therefore, the results based on $\lambda_{n,B}$  and $d_{n,B}$ are questionable.
However, the EM-test, $\lambda_{n,N}$, and $d_{n,N}$ are invariant to the scale transformation and hence the conclusion based on these three tests are more trustworthy.

\section*{Disclosure statement}

No potential conflict of interest was reported by the authors.

\section*{Funding}

Dr Fu's research was supported by NSERC Discovery Grant RGPIN 2018 05846. Dr Li's
research was supported by NSERC Discovery Grant RGPIN 2015 06592.

\nocite{*}
\bibliographystyle{abbrvnat}
\bibliography{EMtest}

\appendix

\section{Proofs}
\label{chap4 con proof}

Since the  EM-test statistic $EM_n^{(K)}$ is invariant to the scale transformation,
without loss of generality, we assume that under the null hypothesis,
the true distribution is $N(0,1)$.
All the derivations are under this distribution.

We first prove two useful technical lemmas.
Lemma \ref{chap4 con lem1} shows
the consistency of $(\alpha_j^{(K)}, \mu_j^{(K)}, \sigma_{j,1}^{(K)}, \sigma_{j,2}^{(K)})$
and Lemma   \ref{chap4 con lem2} derives an upper bound for
the modified log-likelihood difference, which will be used to derive an upper bound for $EM_n^{(K)}$.
Next, we show the upper bound of  $EM_n^{(K)}$ is achievable
and  derive the limiting distribution of
$EM_n^{(K)}$.

\begin{lemma}
\label{chap4 con lem1}
Suppose Conditions D1-D4 are satisfied.
Then under the null distribution $N(0, 1)$,
we have,
for $j=1, 2, \dots, J$ and any $k \leq K$,
\[
\alpha_j^{(k)}-\alpha_j = o_p(1),
~
\mu_j^{(k)} = o_p(1),
~
\sigma_{j,1}^{(k)} -1 = o_p(1)
~
\textrm{and}
~\sigma_{j,2}^{(k)}-1 = o_p(1).
\]
\end{lemma}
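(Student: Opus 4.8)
The plan is to prove the lemma by induction on the iteration index $k$, exploiting the fact that the EM updates for the variance and proportion parameters are (near-)explicit and hence directly analyzable once the step-$k$ estimators are known to be consistent. Throughout we work under $X_1,\dots,X_n$ i.i.d.\ $N(0,1)$, as permitted by the scale-invariance reduction already made, and we must show $(\alpha_j^{(k)},\mu_j^{(k)},\sigma_{j,1}^{(k)},\sigma_{j,2}^{(k)})\stackrel{p}{\to}(\alpha_j,0,1,1)$ for every $j$ and every $k\le K$.

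\emph{Base case $k=1$.} Since $\alpha_j^{(1)}=\alpha_j$ exactly, only the consistency of the maximizer $(\mu_j^{(1)},\sigma_{j,1}^{(1)},\sigma_{j,2}^{(1)})$ of $pl_n(\alpha_j,\mu,\sigma_1,\sigma_2)$ over $(\mu,\sigma_1,\sigma_2)$ requires proof, and I would follow the penalized-likelihood route of \cite{chen2008inference}. Because the maximizer cannot beat the truth, $pl_n(\alpha_j,\mu_j^{(1)},\sigma_{j,1}^{(1)},\sigma_{j,2}^{(1)})\ge pl_n(\alpha_j,0,1,1)$, so it suffices to rule out the escape directions. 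The directions $|\mu|\to\infty$ and $\sigma_1\vee\sigma_2\to\infty$ drive $l_n$ to $-\infty$ after centering at the null value, using a uniform law of large numbers and $|p_n|=o(n)$ (Condition D2). The dangerous direction $\sigma_1\wedge\sigma_2\to 0$ is split into $\sigma\le n^{-1}$, where Condition D4 forces $p_n(\sigma)\le 4(\log n)^2\log\sigma$ to overwhelm the at most $O(n\log n)$ growth of the corresponding piece of $l_n$, and $n^{-1}<\sigma<\epsilon_0$, controlled by the standard bracketing bound for the mixture log-density together with Condition D3. On the remaining compact region bounded away from the boundary, $n^{-1}pl_n(\alpha_j,\cdot)$ converges uniformly to $E\log\{(1-\alpha_j)f(X;0,\sigma_1)+\alpha_j f(X;\mu,\sigma_2)\}$ (the penalty being $o(1)$ after division by $n$), which, by the information inequality and identifiability of finite normal mixtures with $\alpha_j$ fixed in $(0,1)$, is uniquely maximized at $(\mu,\sigma_1,\sigma_2)=(0,1,1)$; a Wald-type argument then gives the claim for $k=1$.

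\emph{Inductive step.} Assume consistency at step $k$ and set $g(x;\alpha,\mu,\sigma_1,\sigma_2)=\alpha f(x;\mu,\sigma_2)/\{(1-\alpha)f(x;0,\sigma_1)+\alpha f(x;\mu,\sigma_2)\}$, so that $w_{ij}^{(k)}=g(X_i;\alpha_j^{(k)},\mu_j^{(k)},\sigma_{j,1}^{(k)},\sigma_{j,2}^{(k)})$. Since $0\le g\le 1$, $g$ is jointly continuous, and $g(x;\alpha_j,0,1,1)\equiv\alpha_j$, a routine uniform law of large numbers (the relevant weight class near $\theta_0=(\alpha_j,0,1,1)$ is Glivenko--Cantelli, with integrable envelope) together with the inductive hypothesis and dominated convergence yields
\[
\frac1n\sum_{i=1}^n w_{ij}^{(k)}\phi(X_i)\stackrel{p}{\to}\alpha_j\,E[\phi(X)],\qquad \frac1n\sum_{i=1}^n (1-w_{ij}^{(k)})\phi(X_i)\stackrel{p}{\to}(1-\alpha_j)\,E[\phi(X)]
\]
for $\phi(x)\in\{1,x,x^2\}$. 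The M-step updates are then near-explicit: $\mu_j^{(k+1)}=\sum_i w_{ij}^{(k)}X_i/\sum_i w_{ij}^{(k)}\stackrel{p}{\to}0$; the variance updates decouple, and with the recommended $p_n$ they read $(\sigma_{j,1}^{(k+1)})^2=\{\sum_i(1-w_{ij}^{(k)})X_i^2+2a_n\hat\sigma_0^2\}/\{\sum_i(1-w_{ij}^{(k)})+2a_n\}$ and $(\sigma_{j,2}^{(k+1)})^2=\{\sum_i w_{ij}^{(k)}(X_i-\mu_j^{(k)})^2+2a_n\hat\sigma_0^2\}/\{\sum_i w_{ij}^{(k)}+2a_n\}$, both $\stackrel{p}{\to}1$ since $a_n=o_p(n)$, $\hat\sigma_0^2\stackrel{p}{\to}1$, $\mu_j^{(k)}\stackrel{p}{\to}0$ and the displayed limits hold; for a general penalty obeying D2--D4 the same limits follow from a Wald-type argument structurally identical to the base case. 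Finally $\alpha_j^{(k+1)}$ maximizes $\sum_i(1-w_{ij}^{(k)})\log(1-\alpha)+\sum_i w_{ij}^{(k)}\log\alpha+p(\alpha)$; dividing by $n$, the leading term tends uniformly on compact subsets of $(0,1)$ to $(1-\alpha_j)\log(1-\alpha)+\alpha_j\log\alpha$, maximized at $\alpha=\alpha_j$, and since $p$ is continuous and finite away from $0$ (Condition D1) the maximizer stays bounded away from $0$ and $1$ and converges to $\alpha_j$ (for $p(\alpha)=\log\alpha$ one simply has $\alpha_j^{(k+1)}=(1+\sum_i w_{ij}^{(k)})/(n+1)$). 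This closes the induction.

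\emph{Main obstacle.} The one genuinely delicate step is the base case, namely ruling out a degenerate maximizer whose component variance collapses to $0$ --- the classical unboundedness pathology of the normal-mixture likelihood --- which is exactly what the penalty $p_n$ and Conditions D2--D4 are engineered to prevent; I would import the partitioning and bracketing estimates of \cite{chen2008inference} for this. The inductive step is then essentially bookkeeping around the near-explicit EM updates and the weighted law of large numbers.
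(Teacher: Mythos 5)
Your proposal is correct and follows essentially the same route the paper relies on: the paper omits the proof and defers to Lemmas 6--7 and Theorem 3 of Chen and Li (2009), whose structure is exactly your argument --- consistency of the penalized maximizer at $k=1$ (with Conditions D2--D4 used to kill the variance-collapse pathology) followed by induction through the near-explicit E- and M-step updates via a weighted law of large numbers. Your closed-form expressions for the updated $\sigma_{j,1}^{(k+1)}$, $\sigma_{j,2}^{(k+1)}$ and $\alpha_j^{(k+1)}$ under the recommended penalties check out, so no gap.
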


\begin{proof}
The proof is similar to
that of Lemma 6, Lemma 7, and Theorem 3
in Chen and Li (2009).
Thus it is omitted.
\end{proof}

The next lemma concerns the upper bound of the modified log-likelihood difference when
($\mu,\sigma_1,\sigma_2$) are in small neighbourhood  of the true values.
For $i=1, 2, \ldots, n$, we define
\[
Z_i=\frac{X_i^2 - 1}{2},
~
U_i=\frac{X_i^3 - 3X_i}{6},
~
\textrm{and}
~
V_i=\frac{X_i^4 - 6X_i^2 + 3}{24}.
\]

\begin{lemma}
\label{chap4 con lem2}
Assume that the same conditions in
Lemma \ref{chap4 con lem1} hold.
Suppose $(\bar\alpha$, $\bar \mu$, $\bar\sigma_1$, $\bar\sigma_2)$
are estimators of $(\alpha,\mu,\sigma_1,\sigma_2)$
such that
$
(\bar \mu,\bar\sigma_1,\bar\sigma_2)=(0,1,1)+o_{p}(1)$
and
$
\bar\alpha\in (\delta,1-\delta)
$
for some $\delta>0$.
Under the null distribution $N(0,1)$, we have
\begin{eqnarray*}
&&2\{ pl_{n}(\bar\alpha, \bar \mu, \bar\sigma_1, \bar\sigma_2)
- pl_{n}(1, 0,1,1)\} \\
&\leq&
\frac{ (\sum_{i=1}^n X_i )^2}{\sum_{i=1}^n X_i^2}
+
\frac{ (\sum_{i=1}^n Z_i )^2}{\sum_{i=1}^n Z_i^2}
+\frac{ \{(\sum_{i=1}^n V_i )^+\}^2}{\sum_{i=1}^n V_i^2}+2p(\bar\alpha)+o_p(1).
\end{eqnarray*}
\end{lemma}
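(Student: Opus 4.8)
The plan is to split the modified log-likelihood difference into a likelihood-ratio part and a penalty part, expand the mixture-to-$N(0,1)$ density ratio in Hermite polynomials, pass through the logarithm with a second-order inequality, and then maximise the resulting localised quadratic form by completing squares. First I would introduce the density ratio $R(x)=\{(1-\bar\alpha)f(x;0,\bar\sigma_1)+\bar\alpha f(x;\bar\mu,\bar\sigma_2)\}/f(x;0,1)$, which integrates to $1$ under $N(0,1)$, and write
\[
2\{pl_{n}(\bar\alpha,\bar\mu,\bar\sigma_1,\bar\sigma_2)-pl_{n}(1,0,1,1)\}
=2\sum_{i=1}^n\log R(X_i)+2\{p(\bar\alpha)-p(1)\}+2\{p_n(\bar\sigma_1)-p_n(1)\}+2\{p_n(\bar\sigma_2)-p_n(1)\}.
\]
By D1, $p(1)=0$, so the $\alpha$-penalty contributes exactly $2p(\bar\alpha)$. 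By Lemma \ref{chap4 con lem1}, $\bar\sigma_1=1+o_p(1)$ and $\bar\sigma_2=1+o_p(1)$, so both lie in a fixed compact neighbourhood of $1$ with probability tending to one; combined with D2--D3 this gives $p_n(\bar\sigma_1)+p_n(\bar\sigma_2)-2p_n(1)\le o_p(1)$, and for the recommended penalty this is immediate because $p_n$ is maximised at $\sigma^2=\hat\sigma_0^2$ and $p_n(\hat\sigma_0)-p_n(1)=a_n\cdot O_p(n^{-1})=o_p(1)$. Hence it suffices to bound $2\sum_i\log R(X_i)$.

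Next I would Taylor-expand $R$ about the null. Using $f(x;\mu,\sigma)/f(x;0,1)=\sigma^{-1}\exp\{\mu x/\sigma^2-\mu^2/(2\sigma^2)+(1-\sigma^{-2})x^2/2\}$, the identity $e^{\mu x-\mu^2/2}=\sum_k(\mu^k/k!)\mathrm{He}_k(x)$, and the fact that the $k$th derivative in $\sigma^2$ of $f(x;0,\sigma)/f(x;0,1)$ at $\sigma=1$ is a multiple of $\mathrm{He}_{2k}(x)$, I obtain, with $\beta_j=\bar\sigma_j^2-1$ and recalling that $2Z_i,6U_i,24V_i$ are $\mathrm{He}_2,\mathrm{He}_3,\mathrm{He}_4$ evaluated at $X_i$,
\[
R(X_i)-1=b_1 X_i+b_2 Z_i+b_3 U_i+b_4 V_i+r(X_i),
\]
where $b_1=\bar\alpha\bar\mu$, $b_2=(1-\bar\alpha)\beta_1+\bar\alpha\beta_2+\bar\alpha\bar\mu^2$, $b_3,b_4$ are the second-order coefficients on $U_i,V_i$, and $r(\cdot)$ is a combination of Hermite polynomials of total order at least three in $(\bar\mu,\beta_1,\beta_2)$. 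The structural facts driving the result are: $b_1$ and $b_2$ are unconstrained in sign and $o_p(1)$; $b_4=\{3(1-\bar\alpha)\beta_1^2+3\bar\alpha\beta_2^2+\bar\alpha\bar\mu^4\}\{1+o_p(1)\}$, which is $\ge0$ with probability tending to one because its leading part is a nonnegative combination of squares; and $b_3$, the $\bar\alpha\bar\mu^2$ piece of $b_2$, and every coefficient of $r$ equal $o_p(1)$ times $|b_1|$, $|b_2|$ or $b_4$, each carrying an extra factor $\bar\mu$, $\beta_1$ or $\beta_2$.

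Then I would pass through the logarithm and orthogonalise. From $R(X_i)-1>-1$ and $\log(1+t)\le t-t^2/2+t^3/3$, splitting the sample into a central block $\{|X_i|\le c_n\}$ on which consistency makes $R(X_i)$ uniformly close to one, and a tail block controlled by moment/tail estimates under $N(0,1)$, yields $2\sum_i\log R(X_i)\le 2\sum_i\{R(X_i)-1\}-\sum_i\{R(X_i)-1\}^2+o_p(1)$. Substituting the expansion and using orthogonality of $X_i,Z_i,U_i,V_i$ under $N(0,1)$ (so $\sum_iA_iB_i=O_p(n^{1/2})$ across distinct blocks while $\sum_iA_i^2=n\{c_A+o_p(1)\}$ with $c_A>0$), and bounding the cross-products, the cubic term and $\sum_i r(X_i)^2$ by Cauchy--Schwarz, I get
\[
2\sum_i\log R(X_i)\le 2b_1S_X+2b_2S_Z+2b_3S_U+2b_4S_V-b_1^2Q_X-b_2^2Q_Z-b_3^2Q_U-b_4^2Q_V+o_p(1),
\]
with $S_\bullet=\sum_i\bullet_i$ and $Q_\bullet=\sum_i\bullet_i^2$. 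Maximising: for every $b_1$, $2b_1S_X-b_1^2Q_X\le S_X^2/Q_X$, and likewise for $b_2$; for every $b_4\ge0$, $2b_4S_V-b_4^2Q_V\le\{(S_V)^+\}^2/Q_V$; the term $-b_3^2Q_U$ is dropped, and since $b_3$ and the $\bar\alpha\bar\mu^2$ piece of $b_2$ are $b_1\cdot o_p(1)$, folding them into the $b_1$-term perturbs $S_X$ only by $o_p(n^{1/2})$ and $Q_X$ by $o_p(n)$, so the bound changes by $o_p(1)$. This gives $2\sum_i\log R(X_i)\le S_X^2/Q_X+S_Z^2/Q_Z+\{(S_V)^+\}^2/Q_V+o_p(1)$, which, with $S_X=\sum X_i$, $S_Z=\sum Z_i$, $S_V=\sum V_i$, $Q_\bullet=\sum_i\bullet_i^2$, and Step 1, is exactly the asserted inequality.

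I expect the main obstacle to be the passage through the logarithm when $\bar\sigma_1,\bar\sigma_2$ are only known to be $1+o_p(1)$: since $f(x;0,\sigma)/f(x;0,1)$ has tails of order $\exp\{(1-\sigma^{-2})x^2/2\}$, $R$ need not be uniformly near one, so controlling $\sum_i\{R(X_i)-1\}^3$ and $\sum_i r(X_i)^2$ demands a careful central/tail decomposition with the cut-off $c_n$ linked to the consistency rate, uses D4 to exclude pathologies when some $\bar\sigma_j$ is extremely small, and needs a localisation step confining attention to the regime $|b_1|,|b_2|,|b_3|,b_4=O_p(n^{-1/2})$ in which all non-leading terms are genuinely $o_p(1)$. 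The remaining bookkeeping (the Hermite cross-products and the remainder $r$) is routine.
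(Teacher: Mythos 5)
Your overall route is the paper's route. The paper splits the modified log-likelihood difference into the likelihood-ratio part $r_{1n}$ and the penalty part exactly as you do, and for $r_{1n}$ it simply cites inequality (A.20) of Chen and Li (2008) to obtain the quadratic upper bound
$2\sum_l \bar t_l S_l - \sum_l \bar t_l^2 Q_l\{1+o_p(1)\}+o_p(1)$ in the directions $X_i, Z_i, U_i, V_i$; your Hermite-expansion-plus-$\log(1+t)\le t-t^2/2+t^3/3$ sketch is precisely the standard derivation of that cited inequality, and your coefficient identities ($b_1=\bar\alpha\bar\mu$, the $\bar\alpha\bar\mu^2$ piece of $b_2$ and all of $b_3$ being $o_p(|b_1|)$, $b_4\ge 0$ asymptotically), the discarding of the $U$-direction, and the completion of squares with the positivity constraint on the $V$-coefficient all match the paper's $\bar t_l$, $\tilde t_2$, $\tilde t_4$ computations and its maximisation of $Q(t_1,t_2,t_4)$ over $t_4\ge 0$. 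You also correctly identify the tail-control issue in passing through the logarithm as the real technical burden hidden behind the citation.

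The one step that does not go through as written is your treatment of the $\sigma$-penalties. You assert that consistency of $\bar\sigma_1,\bar\sigma_2$ together with D2--D3 gives $p_n(\bar\sigma_1)+p_n(\bar\sigma_2)-2p_n(1)\le o_p(1)$. For a general penalty satisfying only D2--D4 this is false: D3 controls the derivative by $o_p(n^{1/6})$, and Lemma \ref{chap4 con lem1} supplies no rate for $\bar\sigma_j-1$, so the best available bound is $o_p(n^{1/6})\{|\bar\sigma_1^2-1|+|\bar\sigma_2^2-1|\}$, which need not be $o_p(1)$ (your remark that it is immediate for the recommended $p_n$ does not cover the lemma as stated). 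The paper's fix, which your framework accommodates with no other change, is to bound this quantity by $2p(\bar\alpha)+o_p(n)\{\tilde t_2^2+\tilde t_4^2\}+o_p(1)$ (using $|\bar\sigma_1^2-1|+|\bar\sigma_2^2-1|\lesssim |\tilde t_2| + \sqrt{\tilde t_4}$ when $\bar\alpha\in(\delta,1-\delta)$ and an elementary $ab\le \epsilon a^2+b^2/(4\epsilon)$ step) and then let the $o_p(n)$ terms be absorbed into the negative quadratic $-\{\tilde t_2^2\sum_i Z_i^2+\tilde t_4^2\sum_i V_i^2\}\{1+o_p(1)\}$, whose coefficients grow like $n$. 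With that substitution your argument coincides with the paper's proof.
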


\begin{proof}
Let
$$
r_{1n}(\alpha,\mu, \sigma_1,\sigma_2 )= 2\{l_n(\alpha,\mu, \sigma_1,\sigma_2 )
-l_n(1, 0, 1, 1)\}.
$$
Then
\begin{eqnarray}
\nonumber
&&2\{ pl_{n}(\bar\alpha, \bar \mu, \bar\sigma_1, \bar\sigma_2)
- pl_{n}(1, 0,1,1)\}\\
&=&r_{1n}(\bar \alpha, \bar \mu, \bar \sigma_1, \bar \sigma_2)+2\{p_n(\bar\sigma_1)+p_n(\bar\sigma_2)+p(\bar\alpha)-2p_n(1)-p(1)\}.
\label{chap4.pln.upper1}
\end{eqnarray}
The upper bounds for the two terms in the above summation will be assessed separately.

We first consider $r_{1n}(\bar \alpha, \bar \mu, \bar \sigma_1, \bar \sigma_2)$.
From (A.20) in Chen and Li (2008), we directly have
\begin{eqnarray*}
r_{1n}(\bar \alpha, \bar \mu, \bar \sigma_1, \bar \sigma_2)&\leq&
2 \Big\{ \bar t_1 \sum_{i=1}^n X_i
+ \bar t_2 \sum_{i=1}^n Z_i
+ \bar t_3 \sum_{i=1}^n U_i
+ \bar t_4 \sum_{i=1}^n V_i \Big\}
\\
&&
- \Big\{ \bar t_1^2 \sum_{i=1}^n X_i^2
+ \bar t_2^2 \sum_{i=1}^n Z_i^2
+ \bar t_3^2 \sum_{i=1}^n U_i^2
+ \bar t_4^2 \sum_{i=1}^n V_i^2 \Big\} \{1+o_p(1)\}\\
&&+o_p(1).
\end{eqnarray*}
In this inequality, $\bar t_l$
are defined by
\[
\bar t_1= \bar m_{1,0},
~
\bar t_2= \bar m_{2,0} +\bar m_{0,1},
~
\bar t_3= \bar m_{3,0} +3\bar m_{1,1},
~
\textrm{and}
~
\bar t_4= \bar m_{4,0} +6\bar m_{2,1} +3\bar m_{0,2},
\]
where $\bar m_{l,s}$ are the first four moments of
the mixing distribution such that
\[
\bar m_{l,s} =
(1-\bar \alpha) 0^l (\bar \sigma_1^2-1)^s
+ \bar \alpha \bar \mu^l (\bar \sigma_2^2-1)^s.
\]
After some simple algebra calculations, we have the following simpler forms of $\bar t_l$:
\begin{eqnarray*}
\bar t_1=\bar \alpha \bar \mu,~
\bar t_2 = \tilde t_2+o_p(\bar t_1),
~\bar t_3 =o_p(\bar t_1),~
\bar t_4 &=&
\tilde t_4+o_p(\bar t_1),
\end{eqnarray*}
where
$\tilde t_2=(1-\bar  \alpha)(\bar  \sigma_1^2-1)
+ \bar  \alpha(\bar  \sigma_2^2-1) $
and
$\tilde t_4=3\{(1-\bar  \alpha)(\bar  \sigma_1^2-1)^2
+ \bar  \alpha(\bar  \sigma_2^2-1)^2\}$.
Hence
\begin{eqnarray}
\nonumber r_{1n}(\bar \alpha, \bar \mu, \bar \sigma_1, \bar \sigma_2)&\leq&
2 \Big\{ \bar t_1 \sum_{i=1}^n X_i
+ \tilde t_2 \sum_{i=1}^n Z_i
+ \tilde t_4 \sum_{i=1}^n V_i \Big\}
\\
&&
\nonumber
- \Big\{ \bar t_1^2 \sum_{i=1}^n X_i^2
+ \tilde t_2^2 \sum_{i=1}^n Z_i^2
+ \tilde t_4^2 \sum_{i=1}^n V_i^2 \Big\} \{1+o_p(1)\}\\
&&+o_p(1).
\label{chap4.r1n1}
\end{eqnarray}

We now assess the upper bound for $2\{p_n(\bar\sigma_1)+p_n(\bar\sigma_2)+p(\bar\alpha)-2p_n(1)-p(1)\}$.
By Conditions D1 and D3,
we have
\begin{eqnarray}
\nonumber
&&2\{p_n(\bar\sigma_1)+p_n(\bar\sigma_2)+p(\bar\alpha)-2p_n(1)-p(1)\}\\
\nonumber&\leq& 2\{p_n(\bar\sigma_1)+p_n(\bar\sigma_2)-2p_n(1)+p(\bar\alpha)\}=o_{p}(n^{1/6})\{|\bar\sigma^2_1-1|+|\bar\sigma^2_2-1|\}+2p(\bar\alpha)\\
&\leq& 2p(\bar\alpha)+
o_p(n)\Big\{ \tilde t_2^2+\tilde t_4^2 \Big\}
+o_p(1).
\label{chap4.r3n}
\end{eqnarray}

Combining (\ref{chap4.pln.upper1})-(\ref{chap4.r3n}),
we get
\begin{eqnarray}
\nonumber
&&2\{ pl_{n}(\bar\alpha, \bar \mu, \bar\sigma_1, \bar\sigma_2) -pl_{n}(1,0,1,1)\}\\
&\leq&2 \Big\{ \bar t_1 \sum_{i=1}^n X_i
+ \tilde t_2 \sum_{i=1}^n Z_i
+ \tilde t_4 \sum_{i=1}^n V_i \Big\} \nonumber
\\
&&
- \Big\{ \bar t_1^2 \sum_{i=1}^n X_i^2
+ \tilde t_2^2 \sum_{i=1}^n Z_i^2
+ \tilde t_4^2 \sum_{i=1}^n V_i^2 \Big\} \{1+o_p(1)\}+2p(\bar\alpha)+o_p(1).
\label{chap4.pln.upper2}
\end{eqnarray}

Define
\begin{eqnarray*}
&&Q(t_1, t_2, t_4) \\
&=&2\Big\{ t_1 \sum_{i=1}^n X_i
+ t_2 \sum_{i=1}^n Z_i
+ t_4 \sum_{i=1}^n V_i \Big\}
- \Big\{ t_1^2 \sum_{i=1}^n X_i^2
+ t_2^2 \sum_{i=1}^n Z_i^2
+ t_4^2 \sum_{i=1}^n V_i^2 \Big\}
\end{eqnarray*}
as a function of $(t_1, t_2, t_4)$ with $t_4\geq0$.
With
\begin{equation}
\label{chap4.hatt}
\hat t_1= \frac{\sum_{i=1}^n X_i}{\sum_{i=1}^n X_i^2},
~
\hat t_2 = \frac{\sum_{i=1}^n Z_i}{\sum_{i=1}^n Z_i^2},
~
\textrm{and}
~
\hat t_4 = \frac{(\sum_{i=1}^n V_i )^+}{\sum_{i=1}^n V_i^2},
\end{equation}
this quadratic function is
maximized.
Further the maximized value of $Q(t_1, t_2, t_4)$ is
\[
Q(\hat t_1, \hat t_2, \hat t_4)
= \frac{ (\sum_{i=1}^n X_i)^2}{\sum_{i=1}^n X_i^2}
+ \frac{ (\sum_{i=1}^n Z_i)^2}{\sum_{i=1}^n Z_i^2}
+ \frac{ \{ (\sum_{i=1}^n V_i)^+ \}^2}{\sum_{i=1}^n V_i^2}.
\]
Together with (\ref{chap4.pln.upper2}), we get
\begin{eqnarray}
\nonumber
&&2\{ pl_{n}(\bar\alpha, \bar \mu, \bar\sigma_1, \bar\sigma_2) -pl_{n}(1,0,1,1)\}\\
&\leq& Q(\hat t_1, \hat t_2, \hat t_4)+2p(\bar\alpha)+o_p(1)\nonumber\\
&\leq&\frac{ (\sum_{i=1}^n X_i)^2}{\sum_{i=1}^n X_i^2}
+ \frac{ (\sum_{i=1}^n Z_i)^2}{\sum_{i=1}^n Z_i^2}
+ \frac{ \{ (\sum_{i=1}^n V_i)^+ \}^2}{\sum_{i=1}^n V_i^2}+2p(\bar\alpha)+o_p(1). ~~
\nonumber
\end{eqnarray}
This finishes the proof.
\end{proof}

We now move to the proof of Theorem \ref{chap4 con thm2}.
The consistency results in Lemma \ref{chap4 con lem1}
enable us to apply Lemma \ref{chap4 con lem2}
to
$
2\{pl_{n}(\alpha_j^{(K)}, \mu_j^{(K)}, \sigma_{j,1}^{(K)}, \sigma_{j,2}^{(K)}) -pl_{n}(1,0,1,1) \}.
$
That is,
\begin{eqnarray}
\nonumber
&&2\{ pl_{n}(\alpha_j^{(K)}, \mu_j^{(K)}, \sigma_{j,1}^{(K)}, \sigma_{j,2}^{(K)}) -pl_{n}(1,0,1,1)\}\\
&\leq&\frac{ (\sum_{i=1}^n X_i)^2}{\sum_{i=1}^n X_i^2}
+ \frac{ (\sum_{i=1}^n Z_i)^2}{\sum_{i=1}^n Z_i^2}
+ \frac{ \{ (\sum_{i=1}^n V_i)^+ \}^2}{\sum_{i=1}^n V_i^2}+2p(\alpha_j)+o_p(1). ~~
\nonumber
\end{eqnarray}
Note that classic theory for regular models implies
\begin{equation*}
\label{chap4 con r2n}
2\{ pl_{n}(1,  0, \hat\sigma_{0},\hat\sigma_{0}) -pl_{n}(1,0,1,1)\} =
\frac{ (\sum_{i=1}^n Z_i)^2}{\sum_{i=1}^n Z_i^2}
+ o_p(1).
\end{equation*}
Hence
\begin{eqnarray}
\nonumber
M_n^{(K)}(\alpha_j)&=&2\{ pl_{n}(\alpha_j^{(K)}, \mu_j^{(K)}, \sigma_{j,1}^{(K)}, \sigma_{j,2}^{(K)}) -pl_{n}(1,  0, \hat\sigma_{0},\hat\sigma_{0}) \}\\
\nonumber&=&2\{ pl_{n}(\alpha_j^{(K)}, \mu_j^{(K)}, \sigma_{j,1}^{(K)}, \sigma_{j,2}^{(K)}) -pl_{n}(1,0,1,1)\}\\
\nonumber&&-2\{ pl_{n}(1,  0, \hat\sigma_{0},\hat\sigma_{0}) -pl_{n}(1,0,1,1)\}\\
&\leq&\frac{ (\sum_{i=1}^n X_i)^2}{\sum_{i=1}^n X_i^2}
+ \frac{ \{ (\sum_{i=1}^n V_i)^+ \}^2}{\sum_{i=1}^n V_i^2}+2p(\alpha_j)+o_p(1). ~~
\nonumber
\end{eqnarray}
The upper bound of $EM_n^{(K)}$ is then given by
\begin{eqnarray}
EM_n^{(K)}&\leq&\frac{ (\sum_{i=1}^n X_i)^2}{\sum_{i=1}^n X_i^2}
+ \frac{ \{ (\sum_{i=1}^n V_i)^+ \}^2}{\sum_{i=1}^n V_i^2}+2\max_jp(\alpha_j)+o_p(1). ~~
\label{chap4.em}
\end{eqnarray}

Next, we show that the upper bound in (\ref{chap4.em})
is achievable.
Since  the EM-iteration increases the modified likelihood (Dempster, Laird, and Rubin (1977)),
we need only show that this is the case when $K=1$.
If suffices to find a set of parameter values $\hat\alpha$ and $(\hat\mu,\hat\sigma_1^2,\hat\sigma_2^2)$
 at which the upper bound (\ref{chap4.em}) is
attained.

 We first choose $\hat\alpha$ such that $p(\hat\alpha)=\max_jp(\alpha_j)$.
 Without loss of generality, we assume that $\hat\alpha=\alpha_1$.
 We next choose $(\hat\mu,\hat\sigma_1^2,\hat\sigma_2^2)$ such that
 \begin{displaymath}
\left\{ \begin{array}{l}
\label{chap4 con equgroup}
\hat\alpha \hat \mu=\hat t_1,\\
(1-\hat \alpha)(\hat \sigma_1^2-1)+ \hat \alpha (\hat \sigma_2^2-1)=\hat t_2,\\
3\{ (1-\hat \alpha) (\hat \sigma_1^2-1)^2+ \hat\alpha (\hat \sigma_2^2-1)^2\}=\hat {t_4},
\end{array} \right.
\end{displaymath}
where the expressions of $\hat t_l$ are given  in (\ref{chap4.hatt}).
It can be checked that $(\hat\mu,\hat\sigma_1^2,\hat\sigma_2^2)$ exists
and
$$\hat\mu =O_p(n^{-1/2}),~
\hat \sigma_1^2 -1 =O_p(n^{-1/4}), ~
\hat \sigma_2^2 -1 =O_p(n^{-1/4}). $$
With these order information,
we obtain
\begin{eqnarray*}
&&2\{ pl_{n}(\hat\alpha, \hat\mu,\hat\sigma_1^2,\hat\sigma_2^2)
- pl_{n}(1, 0, \hat \sigma_0, \hat \sigma_0)\} \\
&=&
\frac{ (\sum_{i=1}^n X_i )^2}{\sum_{i=1}^n X_i^2}
+\frac{ \{(\sum_{i=1}^n V_i )^+\}^2}{\sum_{i=1}^n V_i^2}
+2p(\hat\alpha)
+ o_p(1).
\end{eqnarray*}
Note that $\hat\alpha=\alpha_1$ and $p(\hat\alpha)=p(\alpha_1)=\max_jp(\alpha_j)$.
Thus for $EM_n^{(K)}$, we have
\begin{eqnarray*}
EM_n^{(K)}
&\geq& M_n^{(1)}(\alpha_1)
\geq
2\{ \sup_{\mu, \sigma_1, \sigma_2} pl_{n}(\alpha_1, \mu, \sigma_1, \sigma_2) - pl_{n}(1, 0, \hat \sigma_0, \hat \sigma_0) \}
\\
&\geq&
2\{ pl_{n}(\hat\alpha, \hat\mu,\hat\sigma_1^2,\hat\sigma_2^2)
- pl_{n}(1, 0, \hat \sigma_0, \hat \sigma_0)\}
\\
&=&
\frac{ (\sum_{i=1}^n X_i )^2}{\sum_{i=1}^n X_i^2}
+\frac{ \{(\sum_{i=1}^n V_i )^+\}^2}{\sum_{i=1}^n V_i^2}
+2 \max_jp(\alpha_j)
+ o_p(1).
\end{eqnarray*}
This shows the asymptotic upper bound
of $EM_n^{(K)}$ is also the asymptotic lower bound,
which implies
\[
EM_n^{(K)}
=
\frac{ (\sum_{i=1}^n X_i )^2}{\sum_{i=1}^n X_i^2}
+\frac{ \{(\sum_{i=1}^n V_i )^+\}^2}{\sum_{i=1}^n V_i^2}
+2 \max_jp(\alpha_j)
+ o_p(1).
\]
By the central limit theorem,
both
$
{ \sum_{i=1}^n X_i }\Big/{\sqrt{\sum_{i=1}^n X_i^2} }
$
and
$
{ \sum_{i=1}^n V_i }\Big/{\sqrt{\sum_{i=1}^n V_i^2} }
$
converge in distribution to $N(0,1)$.
Further $X_i$ and $V_i$ are uncorrelated.
Therefore, $EM_n^{(K)}$ asymptotically follows
the distribution
\[
\frac{1}{2} \chi_1^2 +\frac{1}{2} \chi_2^2
+2 \max_jp(\alpha_j).
\]
This finishes the proof.

\end{document}